\let\MYcaption\@makecaption
\let\@makecaption\MYcaption
\DeclareMathOperator*{\argmax}{arg\,max}
\DeclareMathOperator*{\argsup}{arg\,sup}
\newtheorem{theorem}{Theorem}
\newtheorem{lemma}{Lemma}
\newtheorem{assumption}{Assumption}
\newtheorem{remark}{Remark}
\title{\bfseries Smoothing-Averse Control: Covertness and Privacy from Smoothers}%
\author{Timothy L.~Molloy and Girish N.~Nair%
\thanks{The authors are with the Department of Electrical and Electronic Engineering, University of Melbourne, Parkville, VIC, 3010, Australia.
{\texttt{\{tim.molloy,gnair\}@unimelb.edu.au}}}
\thanks{This work received funding from the Australian Government, via grant AUSMURIB000001 associated with ONR MURI grant N00014-19-1-2571.}%
}
\begin{document}

% ACC:
% - Maximise entropy

% Journal Version of ACC:
% - Directed information connection
% - Additional Bounds
% - Myopic policy
% - Bigger simulation studies

% Left out for Robotics conference + journal:
%  - Minimise
%  - Myopic 
%  - Concave
%  - Active SLAM
%  - PWLC
%  - Concave cost function (convex reward function)

% make the title area
\maketitle
\thispagestyle{empty}
\pagestyle{empty}

\begin{abstract}

\boldmath
In this paper we investigate the problem of controlling a partially observed stochastic dynamical system such that its state is difficult to infer using a (fixed-interval) Bayesian smoother.
This problem arises naturally in applications in which it is desirable to keep the entire state trajectory of a system concealed.
We pose our smoothing-averse control problem as the problem of maximising the (joint) entropy of smoother state estimates (i.e., the joint conditional entropy of the state trajectory given the history of measurements and controls).
We show that the entropy of Bayesian smoother estimates for general nonlinear state-space models can be expressed as the sum of entropies of marginal state estimates given by Bayesian filters.
This novel additive form allows us to reformulate the smoothing-averse control problem as a fully observed stochastic optimal control problem in terms of the usual concept of the information (or belief) state, and solve the resulting problem via dynamic programming.
We illustrate the applicability of smoothing-averse control to privacy in cloud-based control and covert robotic navigation.
\end{abstract}

\newcommand\copyrighttext{%
  \footnotesize \textcopyright 2021 IEEE. Personal use of this material is permitted.
  Permission from IEEE must be obtained for all other uses, in any current or future
  media, including reprinting/republishing this material for advertising or promotional
  purposes, creating new collective works, for resale or redistribution to servers or
  lists, or reuse of any copyrighted component of this work in other works.}
\newcommand\copyrightnotice{%
\begin{tikzpicture}[remember picture,overlay]
\node[anchor=south,yshift=5pt] at (current page.south) {\fbox{\parbox{\dimexpr\textwidth-\fboxsep-\fboxrule\relax}{\copyrighttext}}};
\end{tikzpicture}%
}

% For peer review papers, you can put extra information on the cover
% page as needed:
% \ifCLASSOPTIONpeerreview
% \begin{center} \bfseries EDICS Category: 3-BBND \end{center}
% \fi
%
% For peerreview papers, this IEEEtran command inserts a page break and
% creates the second title. It will be ignored for other modes.
\IEEEpeerreviewmaketitle
\copyrightnotice

\section{Introduction}

Controlling or limiting the information that cyber-physical systems disclose during their operation is a challenging yet increasingly important problem across many application domains including robotics and autonomous vehicles \cite{Karabag2019,Li2019,Savas2020,Hibbard2019,Krishnamurthy2019,Mattila2020,Lourenco2020}, networked control \cite{Mochaourab2018,Tanaka2017}, smart grid and power systems \cite{Li2018,Poor2017,Farokhi2018}, and sensor networks \cite{Leong2019}.
Concealing information about the state of a cyber-physical system with dynamics (i.e.~a dynamical system) is particularly challenging since system inputs and/or outputs from isolated time instances have the potential to reveal information about the entire state trajectory through correlations introduced by the system dynamics.
The design of both controllers \cite{Tanaka2017,Tanaka2018,Hibbard2019,Karabag2019,Li2019,Savas2020} and output filters \cite{Mochaourab2018,Leong2019} that limit the disclosure of dynamical system state information through inputs and/or outputs has therefore attracted considerable recent attention (see also \cite{Nekouei2019,Farokhi2020} and references therein).
Despite these efforts, few works have addressed the problem of how best to control a system to conceal its entire state trajectory from an adversary that employs a Bayesian smoother for state trajectory estimation.
We investigate this smoothing-averse control problem in this paper.
%We shall therefore investigate this problem. %control a dynamical system so that its state trajectory is not easy to infer with a Bayesian smoother.

% Privacy for dynamical systems
% motivated primarily by the application of networked or cloud-based control
Recent work on limiting the disclosure of state information in dynamical systems has primarily been motivated by privacy considerations.
Privacy concepts in dynamical systems have been developed by drawing on ideas of privacy for static settings (e.g.~datasets) and include differential privacy \cite{Sandberg2015,Farokhi2020,Hale2015} and information-theoretic privacy \cite{Nekouei2019,Farokhi2020,Tanaka2017,Murguia2021}.
Information-theoretic privacy approaches based on concepts such as mutual information \cite{Murguia2021}, directed information \cite{Nekouei2019,Tanaka2017}, and Fisher information \cite{Farokhi2018} have proved particularly attractive for dynamical systems since they lead to provable privacy guarantees \cite{Farokhi2020}.
However, many of these approaches, particularly those based on directed information (cf.~\cite{Nekouei2019,Tanaka2017}), do not explicitly consider the potential for system inputs and/or outputs from isolated time instances to reveal information about the entire state trajectory.
These approaches are thus vulnerable to leaking information to Bayesian smoothers which can infer states from past, present, and future measurements and controls (cf.~\cite{Briers2010}).

The problem of covert motion planning for robots and other agents has been investigated in parallel to privacy in dynamical systems.
Whilst many works on covert motion planning have been developed in the context of robot navigation with deterministic descriptions of adversary sensing capabilities (e.g., sensor field of view) \cite{Marzouqi2006,Marzouqi2011,Rost2019}, covert motion planning is increasingly being considered with approaches inspired by information-theoretic privacy.
For example, information-theoretic approaches that seek to make the future trajectory or policy of an agent difficult to predict and/or infer have been proposed in \cite{Savas2018,Savas2020,Hibbard2019} on the basis of maximising the control policy entropy, and in \cite{Karabag2019} on the basis of Fisher information.
One of the few approaches that explicitly seeks to conceal an agent's entire trajectory is proposed in \cite{Li2019} and involves minimising the total probability mass (over a specific time horizon) that the state estimates at individual times from a Bayesian filter have at the true states.
This total probability mass criterion leaves open the possibility for adversaries to accurately infer the state at some isolated time instances and then use knowledge of the state dynamics to improve their estimates of the entire state trajectory (e.g., via Bayesian smoothing).

The problem of controlling a system so as to keep its entire state trajectory private or concealed from a Bayesian smoother is thus not fully resolved but is of considerable practical importance.
Indeed, it is frequently desirable to conceal both the instantaneous state of a system along with the states it has previously visited and the states it will visit.
Furthermore, Bayesian smoothers are widely used by state-of-the-art target tracking systems (cf.~\cite{Bar-Shalom2001}) and robot localisation systems (cf.~\cite{Thrun2005}), making them one of the inference approaches most likely to be employed by an adversary.

The main contribution of this paper is the proposal of a new information-theoretic smoothing-averse control problem and the novel reformulation of it as a fully observed stochastic optimal control problem in terms of the standard concept of the information (or belief) state (i.e., the marginal conditional distribution of the partially observed state given the history of measurements and controls, as given by a Bayesian filter).
We specifically consider the problem of maximising the entropy of the posterior distribution over state trajectories computed by a Bayesian smoother since a large value of this smoother entropy is naturally suggestive of a high degree of uncertainty in the trajectory estimates.
%Smoothers therefore have the capability to use system inputs and/or outputs from different time instances to improve their estimate of the entire state trajectory.
Maximising this smoother entropy within a stochastic optimal control problem via dynamic programming is however non-trivial with it not admitting an obvious additive expression in terms of the information state.
Indeed, dynamic programming usually relies on costs (or rewards) being added at each time step and depending only on the current information state and control (cf.\ \cite{Bertsekas2005}).
The structure of the costs required to facilitate efficient dynamic programming has resulted in limited past success synthesising practical control policies that deliver privacy or covertness outside of linear systems (cf.~\cite{Tanaka2017,Li2018,Farokhi2018}) or systems with discrete finite state spaces (cf.~\cite{Savas2018,Savas2020,Hibbard2019,Karabag2019,Li2019}).
Key to our reformulation of our smoothing-averse control problem as a fully observed stochastic optimal control solvable via dynamic programming is thus the development of a novel additive expression for the (joint) entropy of Bayesian smoother estimates for general nonlinear state-space models (which is potentially of independent interest).
We illustrate the applicability of our smoothing-averse control problem to privacy in cloud-based control (as in \cite{Tanaka2017}) and to covert robotic navigation.

This paper is structured as follows:
in Section \ref{sec:problem}, we pose our smoothing-averse control problem; in Section \ref{sec:active}; we reformulate our smoothing-averse control problem as an optimal control solvable via dynamic programming; finally, in Section \ref{sec:examples}, we illustrate the application of smoothing-averse control before presenting conclusions in Section \ref{sec:conclusion}.

\subsection*{Notation}
We shall denote random variables in this paper with capital letters such as $X$, and their realisations with lower case letters such as $x$.
Sequences of random variables will be denoted by capital letters with superscripts to denote their final index (e.g., $X^T \triangleq \{X_0, X_1, \ldots, X_T\}$), and lowercase letters with superscripts will denote sequences of realisations (e.g., $x^T \triangleq \{x_0, x_1, \ldots, x_T\}$).
All random variables will be assumed to have probability densities (or probability mass functions in the discrete case).
The probability density (or mass function) of a random variable $X$ will be written $p(x)$, the joint probability density of $X$ and $Y$ written as $p(x, y)$, and the conditional probability density of $X$ given $Y = y$ written as $p(x|y)$ or $p(x | Y = y)$.
If $f$ is a function of $X$ then the expectation of $f$ evaluated with $p(x)$ will be denoted $E_X [f(x)]$ and the conditional expectation of $f$ evaluated with $p(x|y)$ as $E[f(x) | y]$.
The {\em point-wise} (differential) entropy of a continuous random variable $X$ given $Y = y$ will be written $h(X | y) \triangleq - \int p(x|y) \log p(x|y)\, dx$ with the (average) conditional entropy of $X$ given $Y$ being $h(X | Y) \triangleq E_{Y} \left[ h(X|y) \right]$. The entropy of discrete random variables is defined similarly with summation instead of integration.
The mutual information between $X$ and $Y$ is $I(X; Y) \triangleq h(X) - h(X | Y) = h(Y) - h(Y | X)$ and the \emph{point-wise} conditional mutual information of $X$ and $Y$ given $Z = z$ is $I(X; Y | z) \triangleq h(X | z) - h(X | Y, z)$ with the (average) conditional mutual information given by $I(X; Y | Z) \triangleq E_{Z} \left[ I(X; Y | z) \right]$.

\section{Problem Formulation}
\label{sec:problem}
Let us consider the discrete-time stochastic system
\begin{subequations}
\label{eq:sys}
\begin{align}
 X_{t+1}
 &= f_t \left( X_t, U_t, W_t \right)\\
 Y_t
 &= g_t(X_t, U_{t-1}, V_t)
\end{align}
\end{subequations}
for $t \geq 0$ where $f_t : \mathbb{R}^n \times \mathbb{R}^m \times \mathbb{R}^q \mapsto \mathbb{R}^n$ and $g_t : \mathbb{R}^n \times \mathbb{R}^m \times \mathbb{R}^r \mapsto \mathbb{R}^N$ are (possibly nonlinear) functions, $X_t \in \mathcal{X} \subset \mathbb{R}^n$ are state vectors, $U_t \in \mathcal{U} \subset \mathbb{R}^m$ are control variables, and $Y_t \in \mathcal{Y} \subset \mathbb{R}^N$ are measurements.
We assume that $W_t \in \mathbb{R}^q$ and $V_t \in \mathbb{R}^r$ are each independent but not necessarily identically distributed stochastic noise processes (that are mutually independent).
The state process $X_t$ thus forms a first-order (generalised) Markov chain with initial state density (or probability mass function if $\mathcal{X}$ is discrete) $X_0 \sim p(x_0)$ and state transition kernel
\begin{align}
    \label{eq:stateProcess}
    X_{t+1} | (X_t = x_t, U_t = u_t) \sim p( x_{t+1} | x_t, u_t).
\end{align}
The observations $Y_t$ are also conditionally independent given the state $X_t = x_t$ with conditional probability density
\begin{align}
    \label{eq:obsProcess}
    Y_t | (X_t = x_t, U_{t-1} = u_{t-1}) \sim p( y_t | x_t, u_{t-1}).
\end{align}

In addition to the system \eqref{eq:sys}, consider an adversary that seeks to estimate the state of the system $X_t$ given measurements from a second observation process
\begin{align}
    \label{eq:advProcess}
    Z_t | (X_t = x_t, U_{t-1} = u_{t-1}) \sim p(z_t | x_t, u_{t-1}).
\end{align}
The adversary seeks to estimate the state trajectory $X^T$ of the system over some (known) finite horizon $T$ by employing a standard (fixed-interval) Bayesian smoother that computes the posterior density $p(x^T | z^T)$ (or distribution when $\mathcal{X}$ is discrete) over state trajectory realisations $x^T$ given measurement trajectory realisations $z^T$.
In this paper, we shall consider a novel smoothing-averse control problem in which the system \eqref{eq:sys} is controlled to increase the uncertainty in the estimates of the state trajectory $X^T$ given $Z^T$.

To quantify the uncertainty associated with estimates of the state trajectory, we shall use the conditional entropy
$
    h(X^T | Z^T) 
    = E_{Z^T}[h(X^T | z^T)]
$
where $h(X^T | z^T)$ is the entropy of the smoother density (or distribution) $p(x^T | z^T)$.
We will assume that the specific sensing process of the adversary is unknown, but that it results in greater uncertainty about the state trajectory $X^T$ than the system's own measurements and controls $Y^T$ and $U^{T-1}$.
Our assumption about the sensing capability of the adversary is summarised as follows.
\begin{assumption}
\label{assumption:measurements}
The adversary has access to measurements $Z_t$ from a sensing process that is less informative about the state trajectory $X^T$ than the system's own measurements and controls $Y_t$ and $U_t$ in the sense that
\begin{align*}
    h(X^T | Z^T) \geq h(X^T | Y^T, U^{T-1}).
\end{align*}
\end{assumption}

In light of Assumption \ref{assumption:measurements}, we shall specifically consider a worst-case adversary that is omniscient with direct access to the system's own measurements $Y_t$ and controls $U_t$ without any extra noise (which contrasts with differential privacy approaches that perturb signals observed by adversaries).
For this adversary, the entropy $h(X^T | Z^T)$ will be equivalent to the smoother entropy $h(X^T | Y^T, U^{T-1})$.
To obfuscate the state trajectory, we seek to maximise the smoother entropy $h(X^T | Y^T, U^{T-1})$ using only the control inputs (in contrast to adding noise to the signals observed by the adversary as in differential privacy approaches).
Our smoothing-averse control problem is therefore to find a (potentially stochastic) output feedback control policy $\mu \triangleq \left\{ \mu_t : 0 \leq t < T \right\}$ defined by the conditional probability densities $\mu_t(y^t, u^{t-1}) \triangleq p(u_t | y^t, u^{t-1})$ that solves
\begin{align}
 \label{eq:smoothingAverse}
 \begin{aligned}
   \sup \left\{ h(X^T | Y^T, U^{T-1}) - \gamma E \left[ \sum_{t = 0}^{T-1} c_t \left(x_t, u_t\right) \right] \right\}
 \end{aligned}
\end{align}
subject to the dynamic constraints \eqref{eq:sys} and the state and control constraints $X_t \in \mathcal{X}$ and $U_t \in \mathcal{U}$.
Here, the expectation $E[\cdot]$ is over the states and controls $X^{T-1}$ and $U^{T-1}$, and $c_t : \mathcal{X} \times \mathcal{U} \mapsto [0, \infty)$ for $0 \leq t \leq T - 1$ are instantaneous cost functions (weighted by some parameter $\gamma \geq 0$) that can encode objectives other than increasing the smoother entropy (e.g., keeping the state close to a desired trajectory).

The key novelty of \eqref{eq:smoothingAverse} is our consideration of the smoother entropy $h(X^T | Y^T, U^{T-1})$ as a measure of state-trajectory uncertainty, and its optimisation over the controls $U^{T-1}$ to obfuscate the entire state trajectory.
A key difficulty with solving \eqref{eq:smoothingAverse} is that the smoother entropy $h(X^T | Y^T, U^{T-1})$ is not in the form of a sum of costs (or rewards) added over time nor is it a pure terminal cost.
A direct application of dynamic programming is therefore not possible and naive suboptimal numerical solutions for solving \eqref{eq:smoothingAverse} require the implementation of Bayesian smoothers to compute $p(x^T | y^T, u^{T-1})$ and approximate $h(X^T | Y^{T}, U^{T-1})$.
To overcome these difficulties, we shall develop a new solution approach by establishing a novel additive form for $h(X^T | Y^T, U^{T-1})$ involving conditional probabilities (i.e., information states) calculable with Bayesian filters.
We specifically avoid the naive reduction of $h(X^T | Y^T, U^{T-1})$ to the sum of $h(X_t | Y^t, U^{t-1})$ for $0 \leq t \leq T$ since this reduction only holds under extremely restrictive independence assumptions.

%This approach will also enable us to draw connections between our smoothing-averse control problem and other information-theoretic privacy schemes based on directed information (e.g., \cite{Tanaka2017}).

\section{Stochastic Optimal Control\\ Reformulation and Solution}
\label{sec:active}

In this section, we cast our smoothing-averse control problem \eqref{eq:smoothingAverse} into the framework of stochastic optimal control and examine its solution using dynamic programming.
%We will first exploit the Marko-Massey theory of directed information \cite{Massey1990,Massey2005, Kramer1998} to establish that the smoother entropy $h(X^T | Y^T, U^{T-1})$ has a stage-additive form involving single states $X_t$ rather than the state trajectory $X^T$.
%We shall then use this stage-additive form to write \eqref{eq:smoothingAverse} as a fully observed stochastic optimal control problem solvable via dynamic programming.

\subsection{Smoother Entropy Reformulation}

To reformulate our smoothing-averse control problem, let us first establish a novel additive form of the smoother entropy $h(X^T | Y^{T-1}, U^{T-1})$ for the state-space system \eqref{eq:sys}.

\begin{lemma}
\label{lemma:stageAdditive}
 The smoother entropy $h(X^T | Y^T, U^{T-1})$ for the state-space system \eqref{eq:sys} has the additive form:
 \begin{align}\notag
        &h(X^T | Y^T, U^{T-1})\\\notag
        &\quad= E_{Y^T, U^{T-1}} \Bigg[ \sum_{t = 0}^T \left[ h(X_t | y^{t}, u^{t-1}) - h(X_t | y^{t-1}, u^{t-1}) \right.\\\label{eq:stageAdditive}
        &\qquad\qquad\qquad\qquad\left. + h(X_t | X_{t-1}, y^{t-1}, u^{t-1}) \right]\Bigg]
\end{align}
where we have that $h(X_0 | Y^{0}, U^{-1}) = h(X_0 | Y_0)$ and $h(X_0 | Y^{-1}, U^{-1}) = h(X_0 | X_{-1}, Y^{-1}, U^{-1}) = h(X_0)$.
\end{lemma}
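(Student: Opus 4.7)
The strategy is to view the right-hand sum as a telescoping expansion of the joint filter entropy $H_t \triangleq h(X^t | Y^t, U^{t-1})$. Writing $h(X^T | Y^T, U^{T-1}) = H_T = H_0 + \sum_{t=1}^T (H_t - H_{t-1})$, the stated boundary conventions make the $t=0$ summand on the right-hand side equal $h(X_0 | Y_0) - h(X_0) + h(X_0) = h(X_0 | Y_0) = H_0$, and carrying the outer expectation through the pointwise entropies is routine since $E_{Y^T,U^{T-1}}[h(X_t | y^t, u^{t-1})] = h(X_t | Y^t, U^{t-1})$ and similarly for the other terms. Consequently, the claim reduces to proving, for each $1 \leq t \leq T$, the single increment identity
\begin{equation*}
H_t - H_{t-1} = h(X_t | Y^t, U^{t-1}) - h(X_t | Y^{t-1}, U^{t-1}) + h(X_t | X_{t-1}, Y^{t-1}, U^{t-1}).
\end{equation*}

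To establish the increment identity, I would first expand $H_t = h(X^{t-1}, X_t | Y^t, U^{t-1})$ using the identity $h(A | B, C) = h(A | B) - I(A; C | B)$ to peel off the most recent measurement $Y_t$, giving $H_t = h(X^{t-1}, X_t | Y^{t-1}, U^{t-1}) - I(X^{t-1}, X_t ; Y_t | Y^{t-1}, U^{t-1})$. Three conditional-independence properties specific to the hidden Markov structure of \eqref{eq:sys} then reduce this to $H_{t-1}$ plus the desired stage terms: (i) the observation kernel \eqref{eq:obsProcess} implies $p(y_t | x^t, y^{t-1}, u^{t-1}) = p(y_t | x_t, u_{t-1})$, so $I(X^{t-1}, X_t ; Y_t | Y^{t-1}, U^{t-1}) = I(X_t; Y_t | Y^{t-1}, U^{t-1})$; (ii) the state kernel \eqref{eq:stateProcess} gives $p(x_t | x^{t-1}, y^{t-1}, u^{t-1}) = p(x_t | x_{t-1}, u_{t-1})$, letting the chain rule split the joint entropy as $h(X^{t-1}, X_t | Y^{t-1}, U^{t-1}) = h(X^{t-1} | Y^{t-1}, U^{t-1}) + h(X_t | X_{t-1}, Y^{t-1}, U^{t-1})$; and (iii) the output-feedback form $U_{t-1} \sim \mu_{t-1}(y^{t-1}, u^{t-2})$ makes $U_{t-1}$ conditionally independent of $X^{t-1}$ given $(Y^{t-1}, U^{t-2})$, so $h(X^{t-1} | Y^{t-1}, U^{t-1}) = h(X^{t-1} | Y^{t-1}, U^{t-2}) = H_{t-1}$. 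Unfolding the mutual information as $I(X_t; Y_t | Y^{t-1}, U^{t-1}) = h(X_t | Y^{t-1}, U^{t-1}) - h(X_t | Y^t, U^{t-1})$ then reproduces the increment identity term-for-term.

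The main obstacle is step (iii): one must justify that enlarging the conditioning by the current control $U_{t-1}$ leaves the entropy of the past trajectory $X^{t-1}$ unchanged. This is precisely where the restriction to output-feedback policies $p(u_{t-1} | y^{t-1}, u^{t-2})$ is essential, since that policy structure is what makes $U_{t-1}$ redundant in the conditioning once $(Y^{t-1}, U^{t-2})$ is present. Everything else is just the chain rule for entropy together with the two conditional-independence properties inherited directly from \eqref{eq:stateProcess}--\eqref{eq:obsProcess}; no induction beyond the telescoping is required.
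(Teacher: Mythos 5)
Your proposal is correct and takes essentially the same route as the paper: the telescoping of $H_t \triangleq h(X^t \mid Y^t, U^{t-1})$ is just the paper's induction on $T$ unrolled, and your increment identity rests on exactly the three facts driving the paper's inductive step --- the observation kernel \eqref{eq:obsProcess} making $Y_t$ conditionally independent of $X^{t-1}$ given $(X_t, Y^{t-1}, U^{t-1})$, the Markov property of \eqref{eq:stateProcess}, and the output-feedback policy structure rendering $U_{t-1}$ redundant given $(Y^{t-1}, U^{t-2})$. The only cosmetic difference is the direction of the per-step algebra: you peel $Y_t$ off $h(X^t \mid Y^t, U^{t-1})$ via the chain rule for $I(X^t; Y_t \mid Y^{t-1}, U^{t-1})$, whereas the paper peels off $X^{T-1}$ using the symmetry of $I(X_T; X^{T-1} \mid Y^{T-1}, U^{T-1})$ before reassembling with the entropy chain rule.
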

\begin{proof}
Due to the expectation relationship between conditional entropies and the entropies of conditional distributions, it suffices to show that
 \begin{align}\notag
        &h(X^T | Y^T, U^{T-1})\\\notag
        &\quad= \sum_{t = 0}^T \left[ h(X_t | Y^{t}, U^{t-1}) - h(X_t | Y^{t-1}, U^{t-1}) \right.\\\label{eq:stageAdditiveExp}
        &\qquad\qquad\qquad\qquad\left. + h(X_t | X_{t-1}, Y^{t-1}, U^{t-1}) \right].
\end{align}
We shall therefore prove \eqref{eq:stageAdditiveExp} via induction on $T$.
We first note that for $T = 0$,
   \begin{align*}
       &h(X_0 | Y^{0}, U^{-1}) - h(X_0 | Y^{-1}, U^{-1})\\ 
       &\quad+ h(X_0 | X_{-1}, Y^{-1}, U^{-1})\\
       &\qquad= h(X_0 | Y_0) - h(X_0) + h(X_0)
  \end{align*}
and so \eqref{eq:stageAdditiveExp} holds when $T = 0$.
Suppose then that \eqref{eq:stageAdditiveExp} holds for sequence lengths smaller than $T$ where $T \geq 1$.
Then,
 \begin{align}\notag
        &\sum_{t = 0}^T \left[ h(X_t | Y^{t}, U^{t-1}) - h(X_t | Y^{t-1}, U^{t-1}) \right.\\\notag
        &\quad\left. + h(X_t | X_{t-1}, Y^{t-1}, U^{t-1}) \right]\\\notag
        &= h(X^{T-1} | Y^{T-1}, U^{T-2}) + h(X_T | Y^T, U^{T-1})\\\notag
        &\quad- h(X_T | Y^{T-1}, U^{T-1}) + h(X_T | X_{T-1}, Y^{T-1}, U^{T-1})\\\notag
        &= h(X^{T-1} | Y^{T-1}, U^{T-1}) + h(X_T | Y^T, U^{T-1})\\\label{eq:tempCondEnt}
        &\quad- h(X_T | Y^{T-1}, U^{T-1}) + h(X_T | X^{T-1}, Y^{T-1}, U^{T-1})
\end{align}
where the first equality holds due to the induction hypothesis and the second holds since: 1)
$
    h(X^{T-1} | Y^{T-1}, U^{T-1})
    = h(X^{T-1} | Y^{T-1}, U^{T-2})
$
due to $U_{T-1}$ being conditionally independent of $X^{T-1}$ given $Y^{T-1}$ and $U^{T-2}$ by virtue of the permitted control policy; and, 2) the Markov property of the state process implies that
$
    h(X_T | X_{T-1}, Y^{T-1}, U^{T-1})
    = h(X_T | X^{T-1}, Y^{T-1}, U^{T-1})
    (= h(X_T | X_{T-1}, U_{T-1})).
$
Recalling the definition of mutual information, we have that
\begin{align*}
    &h(X_T | Y^{T-1}, U^{T-1}) - h(X_T | X^{T-1}, Y^{T-1}, U^{T-1})\\
    &\;= I(X_T; X^{T-1} | Y^{T-1}, U^{T-1})\\
    &\;= h(X^{T-1} | Y^{T-1}, U^{T-1}) - h(X^{T-1} | X_T, Y^{T-1}, U^{T-1}) \\
    &\;= h(X^{T-1} | Y^{T-1}, U^{T-1}) - h(X^{T-1} | X_T, Y^T, U^{T-1})
\end{align*}
where the last line holds due to the measurement kernel \eqref{eq:obsProcess} which implies that the measurement $Y_T$ is conditionally independent of $X^{T-1}$ given $X_T$, $U^{T-1}$, and $Y^{T-1}$.
Substituting this last equality into \eqref{eq:tempCondEnt}, we have that
\begin{align*}
        &\sum_{t = 0}^T \left[ h(X_t | Y^{t}, U^{t-1}) - h(X_t | Y^{t-1}, U^{t-1}) \right.\\
        &\quad\left. + h(X_t | X_{t-1}, Y^{t-1}, U^{t-1}) \right]\\
        &= h(X^{T-1} | Y^{T-1}, U^{T-1}) + h(X_T | Y^T, U^{T-1})\\
        &\quad- h(X^{T-1} | Y^{T-1}, U^{T-1}) + h(X^{T-1} | X_T, Y^T, U^{T-1})\\
        &= h(X_T | Y^T, U^{T-1}) + h(X^{T-1} | X_T, Y^T, U^{T-1})\\
        &= h(X^T | Y^T, U^{T-1})
\end{align*}
where the last line follows due to the relationship between joint and conditional entropies.
Thus, \eqref{eq:stageAdditiveExp} holds for $T$ and the proof of \eqref{eq:stageAdditiveExp} via induction is complete.
\end{proof}

\begin{remark}
The first two terms in the summands in \eqref{eq:stageAdditiveExp} can be written as conditional mutual information $I(X_t;Y_t|Y^{t-1},U^{t-1})$. However, we prefer to keep \eqref{eq:stageAdditive} and \eqref{eq:stageAdditiveExp} in entropy form, to bring out the connection to state estimates computed by Bayesian filters.
Given measurement and control realisations $y^t$ and $u^{t-1}$, Bayesian filtering algorithms such as the Kalman filter or the hidden Markov model (HMM) filter can be employed to efficiently compute conditional state densities (or distributions) $p(x_t | y^t, u^{t-1})$, $p(x_t | y^{t-1}, u^{t-1})$, and $p(x_t, x_{t-1} | y^{t-1}, u^{t-1})$ which may be used to compute the entropies in \eqref{eq:stageAdditive}.
\end{remark}

We next introduce the generic Bayesian filter for the system \eqref{eq:sys} before using it and Lemma \ref{lemma:stageAdditive} to reformulate \eqref{eq:smoothingAverse} as a fully observed stochastic optimal control problem.

% \subsection{Stochastic Optimal Control and Dynamic Programming Equations}
% Let us first define the information vectors $\mathcal{I}_t \triangleq \{y_0, \ldots, y_t, u_0, \ldots, u_{t-1}\}$ for $0 \leq t \leq T$ so that \eqref{eq:expectationStageAdditive} becomes
% \begin{align*}
%         h(X^T | Y^T, U^{T-1})
%         &= E_{Y^T, U^{T-1}} \left[ \sum_{t = 0}^T \left[ h(X_t | \mathcal{I}_t) - h(X_t | \mathcal{I}_{t-1}, u_{t-1}) + h(X_t | X_{t-1}, \mathcal{I}_{t-1}, u_{t-1}) \right]\right]
% \end{align*}
% then 

% Noting that the information vectors obey the recursions $\mathcal{I}_{t+1} = \{ \mathcal{I}_t, y_{t+1}, u_{t}\}$ for $0 \leq t \leq T-1$ with $\mathcal{I}_0 \triangleq \{y_0\}$, 

% \begin{remark}
%     Whilst the definition of $h(X^T | Y^T, U^{T-1})$ implies that
%     \begin{align*}
%         h(X^T | Y^T, U^{T-1})
%         &= E_{Y^T, U^{T-1}} \left[ h(X^T | y^T, u^{T-1}) \right],
%     \end{align*}
%     and Theorem \ref{lemma:information} implies that
%     in general the terms within the two expectations will not be equivalent on realisations $y^T$ and $u^{T-1}$.
% \end{remark}

% \subsection{Error Bounds}

% Suppose that $\hat{X}^T$ is any estimate of the state process $X^T$ formed by processing the measurements $Y^T$ (and hence given the controls $U^{T-1}$.
% Then $X^T - Y^T - \hat{X}^T$ form a Markov process.
% By the data processing inequality, we have that
% \begin{align*}
%     I(X^T ; \hat{X}^T)
%     &\leq I(X^T; Y^T)
%     &= h(X^T) - h(X^T | Y^T)
% \end{align*}
% Rate distortion theory

\subsection{Information State and Stochastic Optimal Control Formulation}

Let us define $\pi_{t} : \mathcal{X} \mapsto [0,\infty)$ as the \emph{information state}, that is, the conditional state probability density function (or probability mass function when $\mathcal{X}$ is discrete) $\pi_{t}(x_t) \triangleq p(x_t | y^t, u^{t-1})$ for $0 \leq t < T$.
Given measurement and control realisations $y^T$ and $u^{T-1}$ of \eqref{eq:sys}, the information state $\pi_{t}$ satisfies the recursion
\begin{align}
    \label{eq:bayesTemp}
    \pi_{t}(x_t)%\Pi(\pi_{t-1}, y_t, u_{t-1})
    &= \dfrac{p(y_t | x_t, u_{t-1}) \pi_{t | t-1}(x_t)}{\int_{\mathcal{X}} p(y_t | x_t, u_{t-1}) \pi_{t | t-1}(x_t) \, dx_t}
\end{align}
for all $x_t \in \mathcal{X}$ and $0 < t \leq T$ from an initial state probability density $\pi_0$ where the predicted state probability density $\pi_{t | t-1}(x_t) \triangleq p(x_t | y^{t-1}, u^{t-1})$ is obtained by marginalising the joint density $\bar{\pi}_{t | t-1} (x_t, x_{t-1}) \triangleq p(x_t, x_{t-1} | y^{t-1}, u^{t-1})$ over $x_{t-1}$ noting that
\begin{align}
    \label{eq:bayesianPred}
    \bar{\pi}_{t | t-1}(x_t,x_{t-1})
    &= p(x_t | x_{t-1}, u_{t-1}) \pi_{t-1}(x_{t-1}).
\end{align}
As shorthand, we shall let $\Pi$ denote the mapping defined by the Bayesian filter recursion \eqref{eq:bayesTemp}, namely,
\begin{align}
    \label{eq:bayesianFilter}
    \pi_{t}
    &= \Pi(\pi_{t-1}, u_{t-1}, y_t)
\end{align}
for $0 < t \leq T$ from an initial information state $\pi_0$.

Given $\pi_{t}$, $\pi_{t-1}$, and $\bar{\pi}_{t | t-1}$, the entropies in the expectation of \eqref{eq:stageAdditive} may be written as
\begin{align*}
    &h(X_t | X_{t-1}, y^{t-1}, u^{t-1})\\
    &= - \int_{\mathcal{X} \times \mathcal{X}} \bar{\pi}_{t | t-1}(x_t, x_{t-1}) \log \dfrac{\bar{\pi}_{t | t-1}(x_t, x_{t-1})}{\pi_{t-1}(x_{t-1})} \, dx_t \, dx_{t-1},
\end{align*}
together with
\begin{align*}
    h(X_t | y^t, u^{t-1})
    &= - \int_{\mathcal{X}} \pi_t(x_t) \log \pi_t(x_t) \, dx_t,
\end{align*}
and
\begin{align*}
    h(X_t | y^{t-1}, u^{t-1})
    &= - \int_{\mathcal{X}} \pi_{t | t-1}(x_t) \log \pi_{t | t-1}(x_t) \, dx_t
\end{align*}
(where the integrals become summations if the state space $\mathcal{X}$ is discrete).
Importantly, these entropies may equivalently be viewed as functions only of $\pi_{t-1}$, $y_t$ and $u_{t-1}$ since the Bayesian filtering equations \eqref{eq:bayesianPred} and \eqref{eq:bayesianFilter} imply that $\bar{\pi}_{t | t-1}$ and $\pi_{t}$ are functions of $\pi_{t-1}$ given the measurement $y_t$ and control $u_{t-1}$.
We may therefore define the function
\begin{align}\notag
    \tilde{r} \left( \pi_{t-1}, u_{t-1}, y_t\right)
    &\triangleq h(X_t | y^{t}, u^{t-1}) - h(X_t | y^{t-1}, u^{t-1})\\\label{eq:gtilde}
    &\quad+ h(X_t | X_{t-1}, y^{t-1}, u^{t-1})
\end{align}
so that the summation in \eqref{eq:stageAdditive} can be written as
\begin{align}
    \label{eq:stage_additive_gtilde}
    \begin{split}
        &h(X^T | Y^T, U^{T-1})\\
        &\quad= E_{Y^{T}, U^{T-1}} \left[ \sum_{t = 0}^T \tilde{r} \left( \pi_{t-1}, u_{t-1}, y_t\right) \right]
    \end{split}
\end{align}
with $\tilde{r} \left( \pi_{-1}, u_{-1}, y_0\right) \triangleq h(X_0|y_0)$.
Our main result follows.

% is a function of the conditional state density $\pi_{t-1}$, previous control $u_{t-1}$, and current measurement $y_t$ at time $t$ since given these, the current and (joint) predicted state density $\pi_{t}$ and $\bar{\pi}_{t | t-1}(x_t,x_{t-1})$ can be computed via \eqref{eq:bayesianPred} and \eqref{eq:bayesianFilter}.

% Alternatively, consider the joint belief state $\bar{\pi}_{t | t-1} (x_t, x_{t-1}) \triangleq p(x_t, x_{t-1} | y^{t-1}, u^{t-1})$.
% This joint belief state is given by the recursions
% \begin{align*}
%     \bar{\pi}_{t | t-1} (x_t, x_{t-1})
%     &= p(x_t | x_{t-1}, u_{t-1}) \pi_{t-1}(x_{t-1})\\
%     &= \dfrac{p(y_t | x_t, u_{t-1}) \pi_{t | t-1}(x_t)}{\int_\mathcal{X} p(y_t | x_t, u_{t-1}) \pi_{t | t-1}(x_t) \, dx_t}
%     \bar{\pi}_{t-1 | t-2} (x_t, x_{t-2})
% \end{align*}

\begin{theorem}
\label{theorem:ocp}
Let us define 
\begin{align*}
    r(\pi_{t}, u_{t})
    &\triangleq E_{Y_{t+1}, X_{t}} \left[ \tilde{r} \left( \pi_{t}, u_{t}, y_{t+1}\right) - \gamma c_t (x_{t}, u_{t}) | \pi_{t}, u_{t} \right]
\end{align*}
for $0 \leq t < T$.
Then, our smoothing-averse control problem \eqref{eq:smoothingAverse} is equivalent to the stochastic optimal control problem:
\begin{align}
\label{eq:ocp}
\begin{aligned}
&\sup & & E_{\pi^{T-1}, U^{T-1}} \left[ \left. \sum_{t = 0}^{T-1} r \left( \pi_{t}, u_{t} \right) \right| \pi_0 \right]\\ 
&\mathrm{s.t.} & &  \pi_{t+1} = \Pi\left( \pi_{t}, u_t, y_{t+1} \right)\\
& & & Y_{t+1} \sim p(y_{t+1} | \pi_{t}, u_{t})\\
& & & \mathcal{U} \ni U_t \sim \bar{\mu}_t(\pi_t)
\end{aligned}
\end{align}
where the optimisation is over policies $\bar{\mu} \triangleq \{\bar{\mu}_t : 0 \leq t < T\}$ which are functions of $\pi_t$ in the sense that $\bar{\mu}_t(\pi_t) \triangleq p(u_t | \pi_t)$.
\end{theorem}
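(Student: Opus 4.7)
The plan is to apply Lemma~\ref{lemma:stageAdditive} to rewrite the smoother entropy in \eqref{eq:smoothingAverse} as a stage-wise sum of $\tilde{r}$-terms, collapse each stage's contribution into the single reward $r(\pi_t, u_t)$ via the tower property of conditional expectation, and then invoke the standard sufficient-statistic property of $\pi_t$ to restrict attention to information-state feedback policies $\bar{\mu}_t(\pi_t)$.

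First I would substitute \eqref{eq:stage_additive_gtilde} into the objective of \eqref{eq:smoothingAverse} and observe that the $t=0$ summand $\tilde{r}(\pi_{-1}, u_{-1}, y_0) = h(X_0 \mid y_0)$ contributes, in expectation, only the policy-independent constant $h(X_0 \mid Y_0)$, which may be discarded. Re-indexing the remaining sum gives $\sum_{t=0}^{T-1} E[\tilde{r}(\pi_t, U_t, Y_{t+1})]$. Combining this with the cost term $-\gamma \sum_{t=0}^{T-1} E[c_t(X_t, U_t)]$ and applying the tower property to each summand by conditioning on $(\pi_t, U_t)$ then yields exactly the expected sum $E[\sum_{t=0}^{T-1} r(\pi_t, U_t)]$ appearing in \eqref{eq:ocp}. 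The inner conditional expectations are well-defined functions of $(\pi_t, u_t)$ because the filter recursion \eqref{eq:bayesianFilter} together with the kernels \eqref{eq:stateProcess}--\eqref{eq:obsProcess} render both $p(y_{t+1} \mid \pi_t, u_t)$ and $p(x_t \mid \pi_t)$ determined by $(\pi_t, u_t)$ alone.

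The remaining task, and the main obstacle I anticipate, is to justify restricting the supremum from general output-feedback policies $\mu_t(y^t, u^{t-1})$ to information-state policies $\bar{\mu}_t(\pi_t)$. The restricted class is a subset of the unrestricted one, so the inequality in one direction is immediate; the other direction requires the sufficient-statistic argument. Since the rewritten objective depends on the control policy only through the joint law of $(\pi_t, U_t)_{t=0}^{T-1}$, and $\pi_t$ is a deterministic function of $(y^t, u^{t-1})$ from the given $\pi_0$ via the filter recursion, any policy $\mu$ can be matched by the induced information-state policy $\bar{\mu}_t(\pi_t) := E[\mu_t(Y^t, U^{t-1}) \mid \pi_t]$ without changing the value of the objective. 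Verifying this formally requires showing that the transition kernel of $\pi_t$ under $\bar{\mu}$ agrees with that under $\mu$, which reduces to checking that $p(y_{t+1} \mid y^t, u^t)$ depends on the past only through $(\pi_t, u_t)$ --- a consequence of the Markov property of \eqref{eq:stateProcess} and the conditional independence embedded in \eqref{eq:obsProcess}. This is the classical information-state sufficiency argument \cite{Bertsekas2005}, and once it is carried out the equivalence of \eqref{eq:smoothingAverse} and \eqref{eq:ocp} follows.
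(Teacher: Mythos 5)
Your proposal is correct and follows essentially the same route as the paper's proof: substitute the additive form \eqref{eq:stage_additive_gtilde} from Lemma~\ref{lemma:stageAdditive}, discard the policy-independent constant $h(X_0 \mid Y_0)$, collapse each summand into $r(\pi_t, u_t)$ via the tower property using $p(y_{t+1} \mid \pi_t, u_t)$ and $p(x_t \mid \pi_t, u_t) = \pi_t(x_t)$, and invoke the sufficiency of the information state to restrict to policies $\bar{\mu}_t(\pi_t)$. The only difference is that you sketch the classical sufficient-statistic argument explicitly, whereas the paper delegates it entirely to the citation of \cite[Section 5.4.1]{Bertsekas2005}; your sketch is consistent with that standard result.
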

\begin{proof}
Substituting \eqref{eq:stage_additive_gtilde} into \eqref{eq:smoothingAverse} gives that
\begin{align*}
    &h(X^T | Y^T, U^{T-1}) - \gamma E \left[ \sum_{t = 0}^{T-1} c_t \left(x_t, u_t\right) \right]\\
    &\;= h(X_0 | Y_0) + \sum_{t = 0}^{T-1} E \left[ \tilde{r} \left( \pi_{t}, u_{t}, y_{t+1} \right) - \gamma c_t (x_t, u_t) \right]
\end{align*}
where we have used the fact that $E[\tilde{r} \left( \pi_{-1}, u_{-1}, y_0\right)] = E[h(X_0|y_0)] = h(X_0 | Y_0)$ and the expectations $E[\cdot]$ are over $X^T$, $Y^T$, and $U^{T-1}$ (since $\pi_t$ is a function of $Y^t$ and $U^{t-1}$).
The linearity and tower properties of expectations give that
\begin{align*}
    &E \left[ \tilde{r} \left( \pi_{t}, u_{t}, y_{t+1}\right) - \gamma c_t(x_t, u_t) \right]\\
    &\quad= E_{\pi_t, U_t} \left[ E_{Y_{t+1}} \left[ \tilde{r} \left( \pi_{t}, u_{t}, y_{t+1}\right) | \pi_{t}, u_{t} \right] \right.\\
    &\qquad \left. - \gamma E_{X_t} \left[ c_t(x_t, u_t) | \pi_{t}, u_{t} \right] \right] 
    = E_{\pi_t, U_t} \left[ r(\pi_{t}, u_{t}) \right]
\end{align*}
where the last equality follows by recalling the definition of $r$ and to compute the inner conditional expectations in the second line we note that
\begin{align}
    \label{eq:measurementsState}
    \begin{split}
        p(y_{t+1} | \pi_{t}, u_{t})
        &= \int_{\mathcal{X}\times\mathcal{X}} p(y_{t+1} | x_{t+1}, u_{t}) \\
        &\qquad \times p(x_{t+1} | x_{t}, u_{t}) \pi_{t}(x_{t}) \, dx_{t+1} \, dx_{t}
    \end{split}
\end{align}
and $p(x_{t} | \pi_{t}, u_{t}) = \pi_t(x_t)$.
Thus,
\begin{align*}
    &h(X^T | Y^T, U^{T-1}) - \gamma E \left[ \sum_{t = 0}^{T-1} c_t \left(x_t, u_t\right) \right]\\
    &\quad= h(X_0 | Y_0) + E_{\pi^{T-1}, U^{T-1}} \left[ \sum_{t = 0}^{T-1} r(\pi_t, u_t) \right].
\end{align*}
Finally, $h(X_0 | Y_0)$ is constant, and so it can be omitted from maximisation in our smoothing-averse problem.
Furthermore, $\pi_t$ is a sufficient statistic for the information $(y^t,u^{t-1})$ (see \cite[Section 5.4.1]{Bertsekas2005}) and so it suffices to consider the policies $\bar{\mu}$.
The proof is complete.
\end{proof}

\begin{remark}
\label{remark:theorem1}
The right-hand side of \eqref{eq:gtilde} can be rewritten in the form $h(X_t|X_{t-1},y^{t-1},u^{t-1}) - I(X_t;y_t|y^{t-1},u^{t-1})$. Thus in broad terms,  the optimal policy seeks to increase the unpredictability in the forward one-step-ahead dynamics, while also reducing the information about the current state gained from each successive measurement.
Note also that the stochastic optimal control problem \eqref{eq:ocp} is fully observed in terms of the state estimate $\pi_t$ from the Bayesian filter \eqref{eq:bayesianFilter}, i.e.\ the information state.
It is therefore possible to solve our smoothing-averse problem in the form \eqref{eq:ocp} using standard dynamic programming tools for (perfect information) stochastic optimal control (cf.~\cite{Bertsekas2005}).
\end{remark}

\subsection{Dynamic Programming Equations}
To present dynamic programming equations for solving \eqref{eq:ocp}, we note that it suffices to consider policies $\bar{\mu}$ that are deterministic functions of the conditional state distribution $\pi_t$ in the sense that $u_t = \bar{\mu}_t(\pi_t)$ (see \cite[Proposition 8.5]{Bertsekas1996} and \cite{Bertsekas2005} for details).
The value functions of the optimal control problem \eqref{eq:ocp} are then defined as
\begin{align*}
    J_t(\pi_t)
    \triangleq \sup E_{\pi_{t+1}^{T-1}, U_{t}^{T-1}} \left[ \left. \sum_{\ell = t}^{T-1} r \left( \pi_{\ell}, u_{\ell} \right) \right| \pi_{t} \right]
\end{align*}
for $0 \leq t < T$ with the optimisation subject to the same constraints as \eqref{eq:ocp} and where $\pi_{t+1}^{T-1} \triangleq \{ \pi_{t+1}, \ldots, \pi_{T-1} \}$ and $U_{t}^{T-1} \triangleq \{ U_t, \ldots, U^{T-1} \}$.
The value functions satisfy the dynamic programming recursions
\begin{align}
    \label{eq:deterministicValueFunction}
    \begin{split}
    J_t(\pi_t)
    &= \sup_{u_t \in \mathcal{U}} \left\{ r \left( \pi_{t}, u_{t} \right) \right. \\
    &\qquad \left. + E_{Y_{t+1}} \left[ J_{t+1}(\Pi(\pi_{t}, u_t, y_{t+1})) | \pi_t, u_t \right] \right\}
    \end{split}
\end{align}
for $0 \leq t < T$ where $J_T(\pi_T) \triangleq 0$ and the optimal policy is
\begin{align}
    \label{eq:dpPolicy}
    \begin{split}
    \bar{\mu}_t^*(\pi_t) 
    &= u_t^*
    \in \argsup_{u_t \in \mathcal{U}} \left\{ r \left( \pi_{t}, u_{t} \right) \right. \\
    &\; \qquad \left. + E_{Y_{t+1}} \left[ J_{t+1}(\Pi(\pi_{t}, u_t, y_{t+1})) | \pi_t, u_t \right] \right\}.
    \end{split}
\end{align}

Directly solving the dynamic programming recursions \eqref{eq:deterministicValueFunction} is typically only feasible numerically and only on problems with small state, measurement, and control spaces where the conditional state distributions $\pi_t$ have a finite-dimensional representation (such as when the system \eqref{eq:sys} is linear-Gaussian).
Numerous approximate dynamic programming techniques do exist (see \cite{Krishnamurthy2016, Bertsekas2005} and references therein) with many being concerned with approximating the future value functions $J_{t+1}$.
We shall next consider examples in which we are able to use approximate dynamic programming approaches to solve our smoothing-averse problem \eqref{eq:smoothingAverse}.

\section{Examples and Simulation Results}
\label{sec:examples}
In this section, we illustrate the applicability of our smoothing-averse control problem to privacy in cloud-based control and covert robotic navigation.

\subsection{Privacy in Cloud-based Control}

For our first example, we consider the cloud-based control setting illustrated in Fig.~\ref{fig:cloud_based} and based on the scheme described in \cite{Tanaka2017,Nekouei2019}.
In this cloud-based control setting, the client seeks to have the state process $X_t$ controlled by the cloud service without disclosing the state trajectory $X^T$ directly.
The client provides the cloud service with outputs $Y_t$ of a privacy filter and the cloud service computes and returns the control $U_t = \mu_t(Y^t, U^{t-1})$ using an agreed upon policy $\mu = \{\mu_t : 0 \leq t < T\}$.
The cloud service has knowledge of the policy $\mu$, and the system and privacy filter functions $f_t$ and $g_t$.
The problem the client faces in preserving the privacy of the state trajectory $X^T$ is consistent with our smoothing-averse control problem \eqref{eq:smoothingAverse} with the cloud service being the adversary and having measurements $Z_t = (Y_t, U_{t-1})$.

\begin{figure}
    \centering
    \includegraphics[width=0.9\columnwidth]{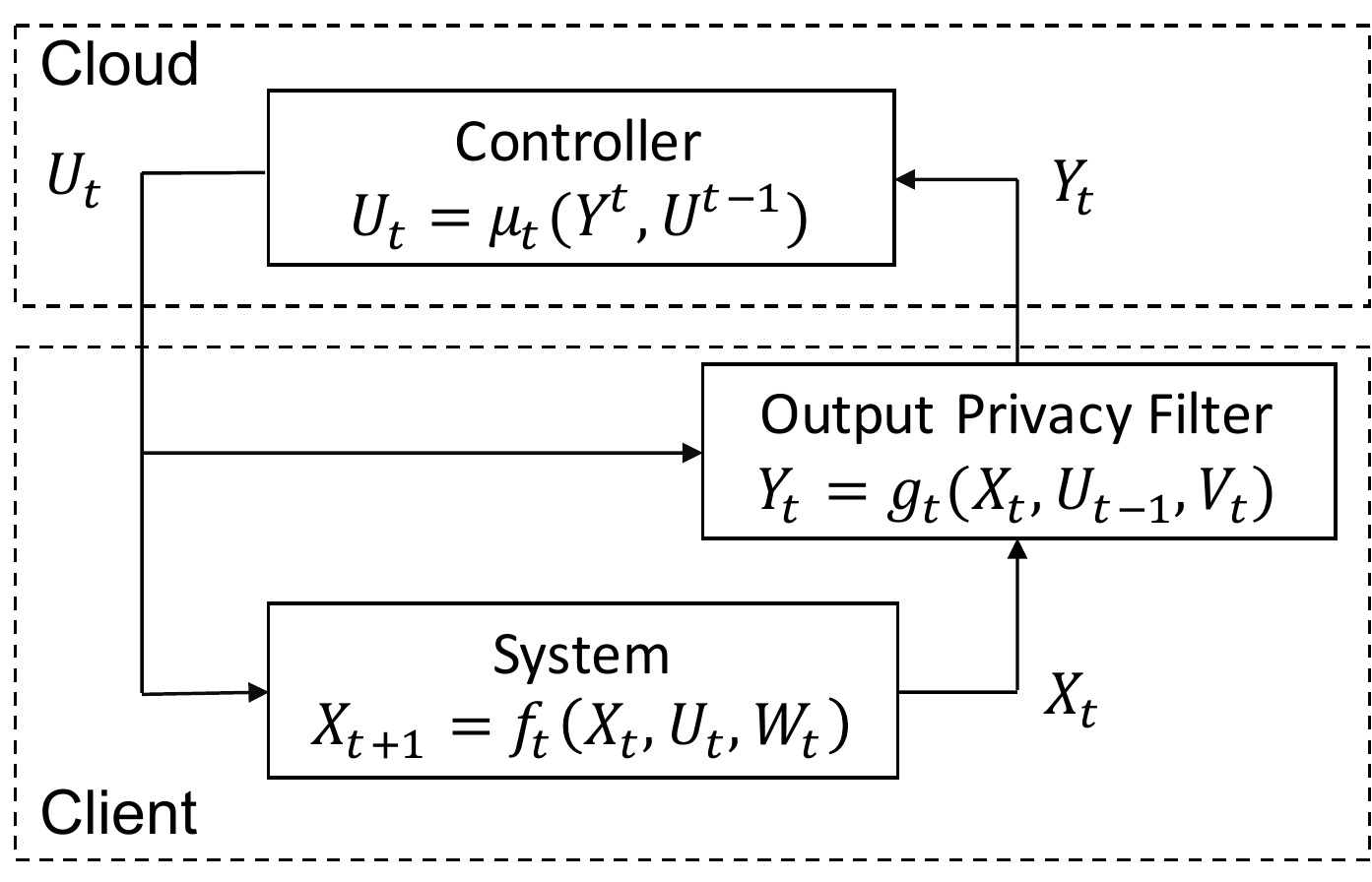}
    \caption{Cloud-based control scheme with output privacy filter based on the scheme described in \cite{Tanaka2017,Nekouei2019}.}
    \label{fig:cloud_based}
\end{figure}

\subsubsection{Simulation Example}
To simulate our smoothing-averse control policy \eqref{eq:dpPolicy} in this cloud-based control setting, we consider a controlled three state Markov chain for the state process $X_t \in \mathcal{X} \triangleq \{1, 2, 3\}$ for $ 0 \leq t \leq T = 10$.
The state dynamics are summarised by three state transition matrices $A({U_t}) \in \mathbb{R}^{3 \times 3}$ that are selected through the choice of controls $U_t \in \{1, 2, 3\}$ and given by
\begin{align*}
    A(1) = 
    \begin{bmatrix}
        0.8 & 0.8 & 0.1\\
        0.1 & 0.1 & 0.8\\
        0.1 & 0.1 & 0.1
    \end{bmatrix}, \;
    A(2) = 
    \begin{bmatrix}
        0.1 & 0.1 & 0.1\\
        0.8 & 0.1 & 0.1\\
        0.1 & 0.8 & 0.8
    \end{bmatrix}
\end{align*}
and
\begin{align*}
    A(3) = 
    \begin{bmatrix}
        0.9 & 0.05 & 0.05\\
        0.05 & 0.9 & 0.05\\
        0.05 & 0.05 & 0.9
    \end{bmatrix}.
\end{align*}
Here, the element in the $i$th row and $j$th column of the state transition matrices correspond to the probabilities $A^{ij}(U_t = u_t) = P(X_{t+1} = i | X_{t} = j, U_t = u_t)$.
The outputs of the privacy filter (or measurements of the states) $Y_t \in \mathcal{Y} \subset \mathbb{R}$ are Gaussian with conditional distribution $p(y_t | X_t = i) = \mathcal{N}(y_t; \nu^i, 1)$ where $\mathcal{N}(\cdot; \nu^i, 1)$ is the univariate Gaussian probability density with mean $\nu^i \in \{1, 3, 5\}$ and unit variance.
Here, $\nu^1 = 1$ for $X_t = 1$, $\nu^2 = 3$ for $X_t = 2$, and $\nu^3 = 5$ for $X_t = 3$.

\subsubsection{Simulation Results}

For the purpose of simulations, we took $c_t(x_t, u_t) = 0$ and $\gamma = 0$ and numerically solved the dynamic programming recursions \eqref{eq:deterministicValueFunction} for the optimal smoothing-averse policy \eqref{eq:dpPolicy} using Markov chain approximation ideas (cf.~\cite[Chapter 4]{Kushner2013}).
Specifically, we gridded the space of conditional state distributions $\pi_t \in \{ \pi = [\pi^1 \; \pi^2 \; \pi^3]': \pi^i \geq 0, \; \sum_{i=1}^3 \pi^i = 1 \}$ with a resolution of $0.00001$ in each component and computed approximate transition dynamics for each control $U_t$ between these discretised state distributions given the measurement process \eqref{eq:measurementsState} (here, $'$ denotes matrix transposition).
We then solved \eqref{eq:dpPolicy} to find and store the approximate optimal policy at all points in the grid.
Whilst the process of constructing the Markov chain approximation and the approximate optimal policy is computationally expensive, it is only performed once offline.
During operation, the optimal control is determined using the approximate policy via a simple look-up by:
\begin{enumerate}
    \item 
    Computing the conditional state distribution $\pi_t$ using a HMM filter (cf.~\cite{Elliott1995}); and,
    \item
    Projecting $\pi_t$ to the closest point in the grid of the Markov chain approximation and looking-up the control at this point in the precomputed approximate policy.
\end{enumerate}

For comparison, we also (approximately) solved the dynamic programming recursions \eqref{eq:deterministicValueFunction} with
\begin{align*}
    \tilde{r}(\pi_{t}, u_{t}, y_{t+1})
    &= h(X_{t+1} | y^{t+1}, u^t) - h(X_{t+1} | y^t, u^t)
\end{align*}
in $r(\pi_t, u_t)$ instead of \eqref{eq:gtilde} in order to minimise the information gain provided directly by the measurements (cf.\ Remark \ref{remark:theorem1}).
This \emph{Min.\ Information Gain} approach is similar to the minimum directed information approach proposed in \cite{Tanaka2017} and establishing their precise relationship remains a topic of our ongoing research.

We performed 200 Monte Carlo simulations of both our smoothing-averse control policy and that minimising the information gain due to measurements (as in \cite{Tanaka2017}).
The initial state of the system in each simulation was selected from a uniform distribution over $\mathcal{X}$ and $T = 10$.
Table \ref{tbl:illustrativeExample} summarises the estimated smoother entropies under both approaches as well as the estimated average probability of error of maximum a posteriori (MAP) estimates of the states $\{X_1, \ldots, X_{10} \}$.
The smoother entropies $h(X^T | Y^{T}, U^{T-1})$ were specifically estimated by averaging the entropies $h(X^T | y^{T}, u^{T-1})$ of the posterior state distributions $p(x^{T} | y^{T}, u^{T-1})$ over the 200 Monte Carlo runs, whilst the average MAP error probabilities were estimated by counting the number of times the state with the greatest probability in the (marginal) posterior $p(x_t | y^{T}, u^{T-1})$ differed from the true state (and averaging over $T = 10$ and the 200 Monte Carlo runs).

The results reported in Table \ref{tbl:illustrativeExample} suggest that use of our smoothing-averse control policy results in a greater smoother entropy $h(X^T | Y^{T}, U^{T-1})$ (i.e., the average entropy of the posterior state distributions $p(x^{T} | y^{T}, u^{T-1})$) than use of the policy that minimises only the measurement information gain.
In these simulations, we also see that the average MAP error probability under our smoothing-averse policy is $3.7\%$ greater than under the minimum information gain policy.
Our smoothing-averse control policy therefore increases the difficulty of inferring the states compared to the minimum information gain policy.
The key reason for the greater smoother entropy and error probabilities under our smoothing-averse policy is that it seeks to increase the unpredictability in the forward one-step-ahead dynamics, while also reducing the information about the current state gained from each successive measurement (as discussed in Remark \ref{remark:theorem1}).
We expect the differences in performance of our smoothing-averse control policy and the minimum information gain policy to persist in cases where both policies have other objectives (i.e., when $c_t(x_t, u_t) \neq 0$) but we leave this to future work.

% \begin{figure}
%     \centering
%     \includegraphics[width=\columnwidth]{SmootherEntropy.eps}
%     \caption{Entropies $h(X^T | y^T, u^{T-1})$ of smoother distributions $p(x^T | y^T, u^{T-1})$ from 200 Monte Carlo simulations of dynamic programming, myopic, and random approaches to maximising the expected smoother entropy $h(X^T | Y^T, U^{T-1})$.}
%     \label{fig:my_label}
% \end{figure}

% Please add the following required packages to your document preamble:
% \usepackage{booktabs}
\begin{table}[t!]
\begin{center}
\caption{Estimated entropy $h(X^T | Y^T, U^{T-1})$ and average maximum a posteriori (MAP) error probabilities from 200 simulations.}
\label{tbl:illustrativeExample}
\begin{tabular}{@{}lcc@{}}
\toprule
\multicolumn{1}{c}{\textbf{Approach}} & \textbf{\begin{tabular}[c]{@{}c@{}}Smoother \\ Entropy\end{tabular}} & \textbf{\begin{tabular}[c]{@{}c@{}}Average MAP\\ Error Prob.\end{tabular}} \\ \midrule
\textbf{Proposed Smoothing-Averse}               & 3.4017                                                                                   & 0.1750                                                                           \\                                              
\textbf{Min. Information Gain} &  2.5505 & 0.1380                                                                  \\\bottomrule
\end{tabular}
\end{center}
%\vspace{-0.75cm}
\end{table}

\subsection{Covert Robotic Navigation}

For our second example, let us consider a robot navigating to a goal location by localising itself using range and bearing measurements to a number of landmarks with known locations.
An adversary is tracking the robot by also obtaining range and/or bearing measurements from each of the landmarks to the robot and combining these to estimate the robot's position and orientation (e.g., the adversary's sensors may be colocated with the landmarks).
The robot seeks to covertly navigate to its desired goal location whilst making it difficult for the adversary to estimate its trajectory.
This covert navigation problem is consistent with our smoothing-averse control problem \eqref{eq:smoothingAverse} under Assumption \ref{assumption:measurements} when the robot assumes that the adversary possesses the worst-case capability to infer its controls $U_t$, and has access to similar measurements $Y_t$ such that $Z_t = (Y_t, U_{t-1})$.
 
\subsubsection{Simulation Example}
For the purpose of simulations, we consider 
%the extended Kalman filter (EKF) robot localisation setting of \cite[Chapter 7]{Thrun2005}. 
a robot in two dimensions with state vector $X_t = [X_t^1 \; X_t^2 \; X_t^3]'$ where $X_t^1$ and $X_t^2$ are the $x-$ and $y-$ axis positions of the robot in meters and $X_t^3 \in [-\pi, \pi)$ is the robot heading in radians.
The robot controls are $U_t = [U_t^1 \; U_t^2]'$ where $U_t^1 \geq 0$ is its linear speed in meters per second and $U_t^2$ is its angular turn rate in radians per second.
The robot's motion model is
\begin{align*}
    X_{t+1}
    &= X_t + 
    \begin{bmatrix}
        U_t^1 \Delta t \sin X_t^3 \\
        U_t^1 \Delta t \cos X_t^3 \\
        \Delta t U_t^2
    \end{bmatrix}
    + W_t
\end{align*}
where $\Delta t = 1$ is the sampling time.
We consider the map with 5 landmarks shown in Fig.~\ref{fig:covert} and assume that all landmarks are observed by the robot at each time $t$ so that the total measurement vector is given by $Y_t = [Y_t^{1\prime} \; \cdots \; Y_t^{5\prime}]'$.
Here, $Y_t^j$ is the range and bearing measurements to the $j$th landmark satisfying
\begin{align*}
    Y_t^j
    &=
    \begin{bmatrix}
        \sqrt{(m_x^j- X_t^1)^2 + (m_y^j - X_t^2)^2}\\
        \arctan2(m_y^j - X_t^2, m_x^j - X_t^1) - X_t^3
    \end{bmatrix}
    + V_t^j
\end{align*}
with $(m_x^j, m_y^j)$ is the location of the $j$th landmark.
We note that both $W_t$ and $V_t^j$ are zero-mean independent and identically distributed Gaussian process with covariance matrices
\begin{align*}
    \begin{bmatrix}
    0.1^2 & 0 & 0\\
    0 & 0.1^2 & 0\\
    0 & 0 & (\frac{\pi}{180})^2
    \end{bmatrix}
    \text{ and }
    \begin{bmatrix}
        50^2 & 0\\
        0 & (\frac{\pi}{18})^2
    \end{bmatrix}, \text{ respectively.}
\end{align*}

\subsubsection{Simulation Results}

For simulations, we took the goal pose of the robot to be $x_g = [-150 \; 50\; 0]'$ and incorporated it via the cost $c_t(x_t, u_t) = \|x_t - x_g\|^2$ with $\gamma > 0$.
The continuous state and measurement spaces in this example make the direct solution of the dynamic programming equations \eqref{eq:deterministicValueFunction} intractable.
We therefore implemented a suboptimal approach based on receding-horizon ideas and involving both the approximation of the future value in the dynamic programming recursions \eqref{eq:deterministicValueFunction} using a small number of Monte Carlo simulations and a simplified set of controls.
Specifically, at each time $t$, the robot's linear speed control is set to $U_t^1 = 1$ and the robot performs a turn with (potentially zero) rate $U_{t}^2 \in [-\pi, \pi)$ solving
		\begin{align*}
		U_{t}^2
		&= \argmax_{u \in [-\pi, \pi)} E_{Y_{t+1}} \left[ J_{t+1}(\Pi(\pi_t, [1, u]', y_{t+1})) | \pi_t \right]
	\end{align*}
	which approximates the optimal policy \eqref{eq:dpPolicy} by:
	\begin{enumerate}
		\item 
		Treating the instantaneous cost $r \left( \pi_{t}, u_{t} \right)$ as constant (and hence omitting it) since moving during a single time step does not significantly change the robot's position; and,
		\item
		Computing an approximation of the expected future value of $J_{t+1}$ by simulating the robot forward in time over a fixed horizon of $10$ time steps and averaging the value of the objective function over 10 different simulations, namely,
\begin{align*}
    &E_{Y_{t+1}} \left[ J_{t+1}(\Pi(\pi_t, [1, u]', y_{t+1})) | \pi_t \right]\\
    &\quad\approx \dfrac{1}{10} \sum_{n = 1}^{10} \sum_{k = t+1}^{t + 10} \left[ \tilde{r}(\pi_{k}^{(n)}, [1, u]', y_{k+1}^{(n)}) \right.\\
    &\qquad\qquad\qquad\qquad\qquad\left. - \gamma \| x_{k}^{(n)} - x_g\|^2 \right]
\end{align*}
where the superscripts denote quantities from the $n$th simulation run.
The entropies in $\tilde{r}$ are computed using the covariance matrices from an implementation of the recursive landmark-based extended Kalman filter localisation algorithm detailed in \cite[Chapter 7]{Thrun2005} together with the equations for the entropy of multivariate Gaussian distributions (see \cite[Theorem 8.4.1]{Cover2006}).
\end{enumerate}

%Indeed, the EKF algorithm computes the conditional state distribution $\pi_{t+1}$ in the form of a multivariate Gaussian with mean vector $\hat{x}_{t+1} \in \mathbb{R}^3$ and covariance matrix $\hat{\Sigma}_{t+1} \in \mathbb{R}^{3 \times 3}$, and by performing the prediction (but not update) step of the EKF, the joint predicted distribution $\bar{\pi}_{t+1|t}$ is similarly given by a multivariate Gaussian with a mean and covariance matrix.

We performed 25 independent simulations of our approximate smoothing-averse control problem with $\gamma = 100$ and $\gamma = 0.06$ to examine the impact of prioritising getting closer to the goal pose (with $\gamma = 100$) versus increasing the smoother entropy $h(X^T | Y^T, U^{T-1})$ (with $\gamma = 0.06$).
The mean robot path (and the standard deviation of the robot's path) over these simulations is shown in Fig.~\ref{fig:direct} for $\gamma = 100$ and Fig.~\ref{fig:smoothingaverse} for $\gamma = 0.06$.
From Fig.~\ref{fig:direct}, we see that when not considering the smoother entropy, the robot will proceed directly to the goal position and pass close to the landmarks.
In contrast, Fig.~\ref{fig:smoothingaverse} suggests that the robot is more varied in its path and often seeks to avoid passing near landmarks when attempting to increase the smoother entropy since the farther the robot is from a landmark, the more ambiguous the robot's position given a particular range and bearing measurement.

\begin{figure}[t!]
     \centering
     \begin{subfigure}[t!]{\columnwidth}
         \centering
         \includegraphics[width = 0.9\columnwidth]{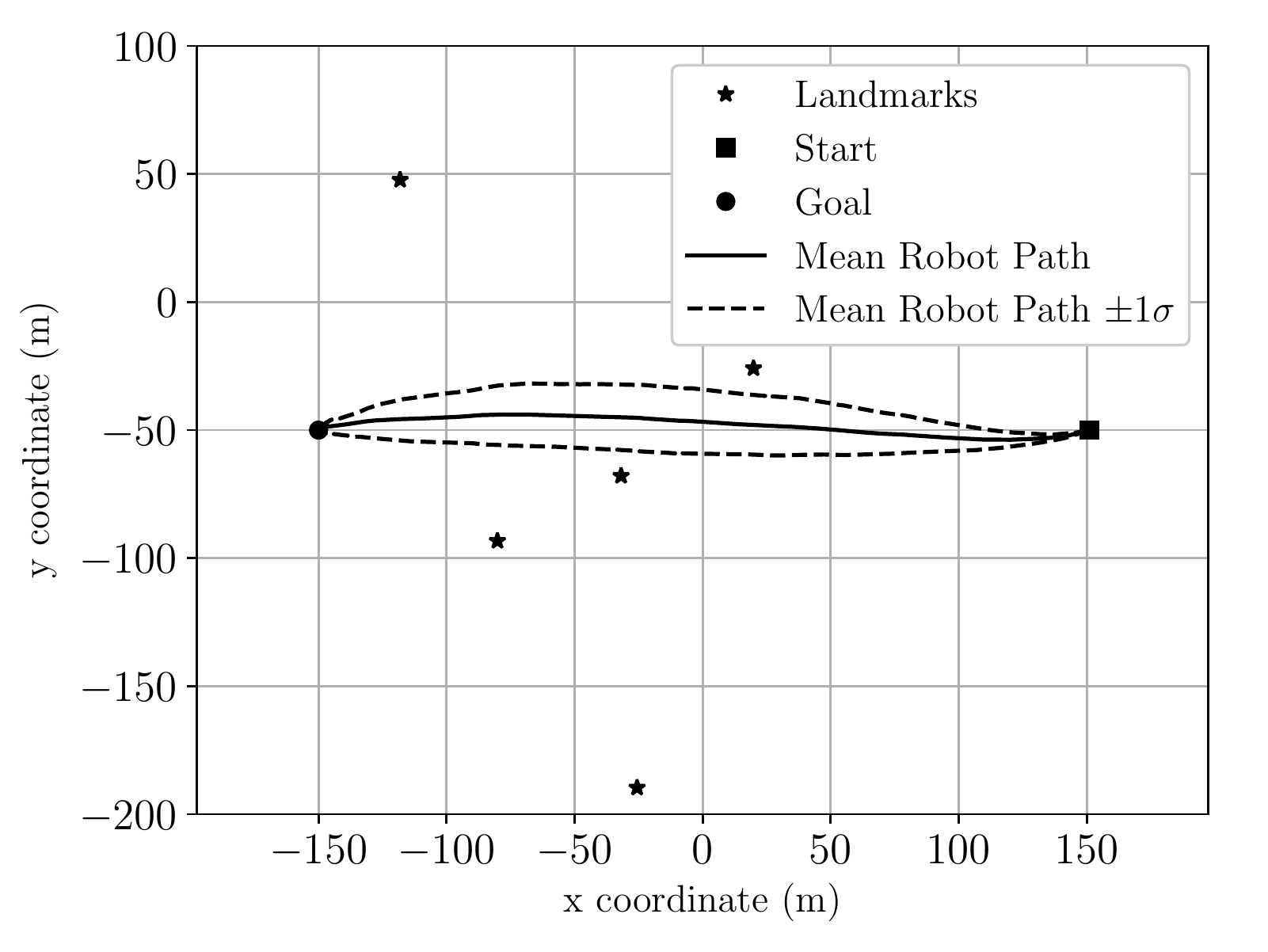}
         \caption{}
         \label{fig:direct}
     \end{subfigure}
     \vfill
     \begin{subfigure}[t!]{\columnwidth}
         \centering
         \includegraphics[width = 0.9\columnwidth]{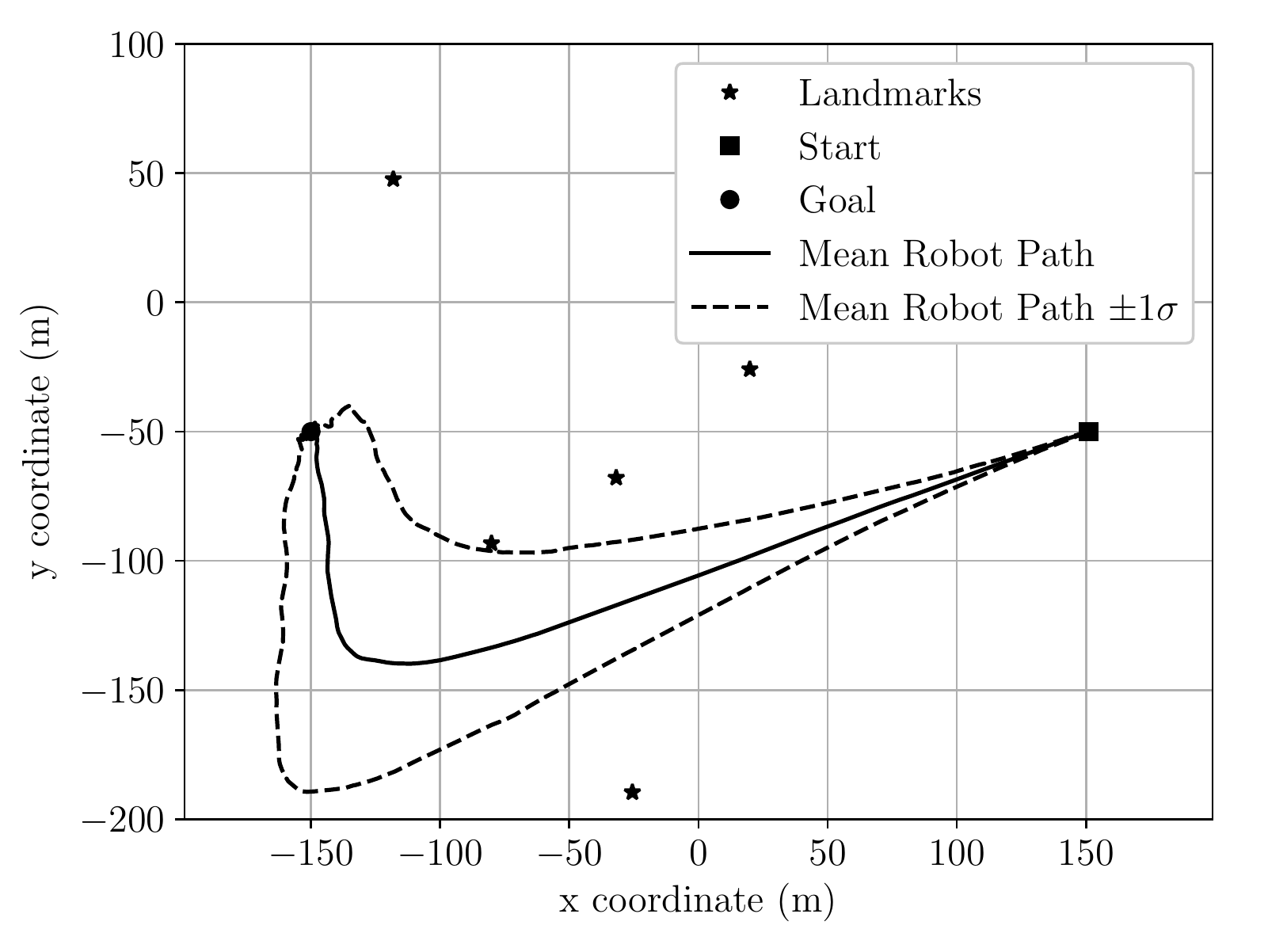}
         \caption{}
         \label{fig:smoothingaverse}
     \end{subfigure}
        \caption{Covert robotic navigation example with (a) $\gamma = 100$ to heavily penalise distance from the goal location, and (b) $\gamma = 0.06$ to jointly penalise distance from goal and increase smoother entropy $h(X^T | Y^T, U^{T-1})$.}
        \label{fig:covert}
        %\vspace{-0.5cm}
\end{figure}

\section{Conclusion and Future Work}
\label{sec:conclusion}

We posed a new smoothing-averse control problem involving the maximisation of the entropy of the state trajectory estimate from a (fixed-interval) Bayesian smoother.
We showed that the smoother entropy for general nonlinear state-space models has an additive form that enables smoothing-averse control to be reformulated as an optimal control problem solvable via dynamic programming.
We illustrated the applicability of smoothing-averse control to privacy in cloud-based control and covert robotic navigation.

Future work will investigate the efficient (numerical) solution of the optimal control problem and dynamic programming equations that have arisen in our consideration of smoothing-averse control.
We also plan to investigate provable guarantees on the privacy and covertness provided by smoothing-averse control schemes, and best-response (i.e., game-theoretic) strategies for adversaries to counter smoothing-averse schemes.
In particular, the smoothing-averse scheme could be extended to pose novel adversarial inverse smoothing problems in a similar vain to recent inverse filtering problems that involve estimating (or countering the estimation) of state posterior distributions from observed actions or controls (cf.\ \cite{Lourenco2020,Krishnamurthy2019,Mattila2020}).
Similarly, as in recent work \cite{Lourenco2020}, it would be interesting to investigate optimising alternative privacy functions to the smoother entropy to conceal entire state trajectories.

%\section*{Appendix}
%\label{sec:appendix}

\section*{Acknowledgement}

We thank Iman Shames for his feedback and discussion on a draft of this paper.

%% References section
\bibliographystyle{IEEEtran}
%\balance
\bibliography{IEEEabrv,Library}

\end{document}